\newcommand{\orcid}[1]{\href{https://orcid.org/#1}{\includegraphics[width=8pt]{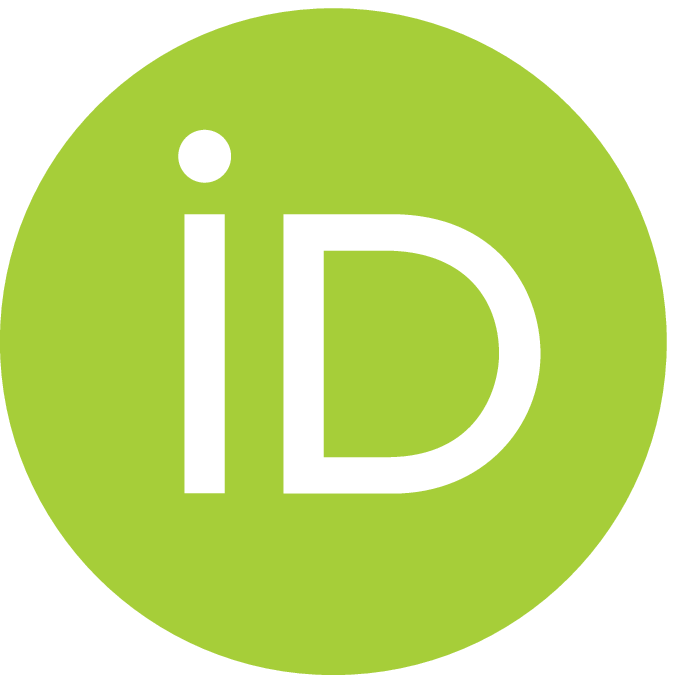}}}
\newtheorem{rem}{Remark}
\newtheorem*{definition*}{Definition.}
\renewenvironment{abstract}{
  \small
  \list{}{
    \setlength{\leftmargin}{.5cm}%
    \setlength{\rightmargin}{\leftmargin}%
  }%
  \item\relax
}
\begin{document}

\mainmatter  

\title{Asymptotic Optimality of the Greedy Patching Heuristic for Max~TSP in Doubling Metrics}

\titlerunning{Greedy Patching Heuristic for Max~TSP in doubling metrics}

%
%
\author{Vladimir Shenmaier \orcid{0000-0002-4692-1994}}
%
\authorrunning{V.\,V. Shenmaier}

\institute{Sobolev Institute of Mathematics, Novosibirsk, Russia\\
\mailsa}

%
%

\tocauthor{Vladimir Shenmaier}
\maketitle

\begin{abstract}{\bf Abstract.}
The maximum traveling salesman problem (Max~TSP) consists of finding a Hamiltonian cycle with the maximum total weight of the edges in a given complete weighted graph.
We prove that, in the case when the edge weights are induced by a metric space of bounded doubling dimension, asymptotically optimal solutions of the problem can be found by the simple greedy patching heuristic.
Taking as a start point a maximum-weight cycle cover, this heuristic iteratively patches pairs of its cycles into one minimizing the weight loss at each step.
\begin{keywords}
Max~TSP $\cdot$ Heuristic $\cdot$ Greedy patching $\cdot$ Doubling metric $\cdot$ Asymptotically exact algorithm
\end{keywords}
\end{abstract}

\section{Introduction}
The maximum traveling salesman problem can be formulated as follows:\medskip

\noindent\textbf{Max~TSP.}
Given an $n$-vertex complete weighted (directed or undirected) graph $G$ with non-negative edge weights, find a Hamiltonian cycle in $G$ with the maximum total weight of the edges.\medskip

Max~TSP is the maximization version of the classic traveling salesman problem (TSP) and, like TSP, is among the most intensively researched NP-hard problems in computer science.
In this paper, we consider the metric Max~TSP, i.e., the special case in which the edge weights satisfy the triangle inequality and the symmetry axiom.\medskip

\noindent\textbf{Related work.}
Max~TSP has been actively studied since the 1970s.
The approximation factors of currently best polynomial-time algorithms in different cases are: $2/3$ for arbitrary asymmetric weights \cite{Kaplan}; $7/9$ for arbitrary symmetric weights \cite{Paluch79}; $35/44$ for the asymmetric metric case \cite{Kowalik3544}; and $7/8$ for the metric case~\cite{Kowalik78}.

On the complexity side, Max~TSP is APX-hard even in a metric space with distances $1$ and $2$: It follows from the corresponding result for TSP \cite{Papadimitriou1993,EK}.
The problem remains NP-hard in the geometric setting when the vertices of the input graph are some points in space $\mathbb R^3$ and the distances between them are induced by Euclidean norm~\cite{Fekete}.
The proof of this fact implies that the Euclidean Max~TSP does not admit a scheme FPTAS unless P$=$NP.
However, Max~TSP admits a scheme EPTAS in an arbitrary metric space of fixed doubling dimension $dim$~\cite{Shen2021}.
The time complexity of this scheme is $O\big(2^{(2/\varepsilon)^{2dim+1}}+n^3\big)$.
An actual question is developing practically usable approximation algorithms.

In~\cite{Serdyukov1987}, an $O(n^3)$-time algorithm is proposed which computes asymptotically optimal solutions of Max~TSP in Euclidean space of any fixed dimension.
The relative error of this algorithm is estimated as $c_d/n^{\frac{2}{d+1}}$, where $d$ is the dimension of space and $c_d$ is some constant depending on $d$.
In~\cite{Shen2010,Shen2014}, this result is extended to the case when the edge weights are induced by any (unknown) vector norm.
The algorithms from \cite{Serdyukov1997,Barvinok} allow to find close-to-optimal and optimal solutions of Max~TSP in the case of metrics defined by polyhedrons with a small number of facets.
In~\cite{Shen2021}, an $O(n^3)$-time approximation algorithm was proposed which computes asymptotically optimal solutions of the problem in an arbitrary metric space of bounded doubling dimension $dim$.
The relative error of this algorithm is estimated as $(11/6)/n^{\frac{1}{2dim+1}}$.\medskip

\noindent\textbf{Greedy patching heuristic.}
In this paper, we address one of the simplest and natural ideas how to find good solutions of Max~TSP.
It can be described as follows: Taking as a start point a maximum-weight cycle cover, iteratively patch pairs of its cycles into one minimizing the weight loss at each step.
To patch two cycles into one, a pair of edges from different cycles is replaced by another pair of edges which connect these cycles.
For the classic TSP, the similar greedy patching algorithm was studied in~\cite{Zverovich,Goldengorin} as the ``greedy Karp-Steele heuristic'' but no theoretical estimates for its efficiency were obtained.\medskip

\noindent\textbf{Our contributions.}
We study approximation properties of the greedy patching heuristic applied to Max~TSP.
It is easy to show (Corollary~\ref{general}) that, in the general metric setting, this heuristic has a constant-factor approximation ratio.
We prove that, for instances of Max~TSP in any metric space of doubling dimension $dim$, the greedy patching heuristic finds approximate solutions of the problem with relative error at most $(7/3+o(1))/n^{\frac{1}{2dim+1}}$ as $n\rightarrow\infty$.
Thereby this simple heuristic implements an asymptotically exact algorithm in the case of fixed or sublogarithmic doubling dimensions, i.e., when $dim=o(\log n)$.

An advantage of the greedy patching heuristic over the algorithms from~\cite{Shen2021} is that it does not require any information about the value of doubling dimension $dim$, which may not always be available, even approximately.
It should be noted that the derived theoretical estimate for the relative error of this heuristic is rather rough and may be improved, especially in geometric cases.

\section{Basic definitions and properties}
A \emph{metric space} is an arbitrary set ${\cal M}$ with a non-negative distance function $dist$ which is defined for each pair $x,y\in{\cal M}$ and satisfies the triangle inequality and the symmetry axiom.
Given a metric space $({\cal M},dist)$, a \emph{ball} of radius $r$ in this space centered at a point $x\in{\cal M}$ is the set $B(x,r)=\{y\in{\cal M}\,|\,dist(x,y)\le r\}$.
The \emph{doubling dimension} of a metric space is the smallest value $dim\ge 0$ such that every ball in this space can be covered by $2^{dim}$ balls of half the radius.

\begin{rem}\label{rem1}
It is easy to see that, if a metric space is of doubling dimension at most $dim$, then each $r$-radius ball in this space can be covered by $(2/\delta)^{dim}$ balls of radius $\delta r$, where $\delta$ is any value from $(0,1)$ (e.g., see~\cite{Shen2021}).
\end{rem}

Suppose that we are given a set $V$ of $n$ points in ${\cal M}$ and also all the pairwise distances $dist(a,b)$, $a,b\in V$.
Denote by $G[V]$ the complete weighted undirected graph on the vertex set $V$ in which the weight of every edge $\{a,b\}$ is defined as $dist(a,b)$.
The metric Max~TSP asks to find a maximum-weight Hamiltonian cycle in $G[V]$.

\begin{definition*}
Let $c_1$, $c_2$ be vertex-disjoint cycles in $G[V]$ and $\{a_i,b_i\}$ be any edge in $c_i$, $i=1,2$.
A \emph{patch of the cycles $c_1,c_2$ on the edges $\{a_1,b_1\}$, $\{a_2,b_2\}$} is a combining of these cycles into one by replacing the pair of edges $\{a_1,b_1\}$, $\{a_2,b_2\}$ by one of two pairs $\{a_1,b_2\}$, $\{a_2,b_1\}$ or\, $\{a_1,a_2\}$, $\{b_1,b_2\}$ of the maximum total weight.
A \emph{weight loss} of this patch is the value
\begin{eqnarray*}
Loss\big(\{a_1,b_1\},\{a_2,b_2\}\big)=dist(a_1,b_1)+dist(a_2,b_2)-\\
\max\big\{dist(a_1,b_2)+dist(a_2,b_1),\,dist(a_1,a_2)+dist(b_1,b_2)\big\}.
\end{eqnarray*}
\end{definition*}

\begin{definition*}
A \emph{cycle cover} of a graph is a spanning subgraph of this graph in which every connected component is a simple cycle.
\end{definition*}

\noindent\textbf{Greedy Patching Heuristic (GPH).}

\noindent\emph{Input}: a set $V$ of $n$ points in ${\cal M}$; the distances $dist(a,b)$ for all $a,b\in V$.
\emph{Output}: a Hamiltonian cycle $H$ in the graph $G[V]$.\medskip

\noindent\emph{Initial Step}:
By using the $O(n^3)$-time algorithm from \cite{Gabow}, find a maximum-weight cycle cover $C_0$ of the graph $G[V]$.\medskip

\noindent\emph{Patching Steps}:
Let $C=C_0$ and, while $C$ contains more than one cycles, repeat the fol\-lo\-wing operations.
Find edges $\{a_1,b_1\}$ and $\{a_2,b_2\}$ from different cycles of $C$ with the minimum value of $Loss\big(\{a_1,b_1\},\{a_2,b_2\}\big)$.
Patch the cycle cover $C$ by replacing the pair of edges $\{a_1,b_1\}$, $\{a_2,b_2\}$ by one of the pairs $\{a_1,a_2\}$, $\{b_1,b_2\}$ or $\{a_1,b_2\}$, $\{a_2,b_1\}$ with the maximum total weight.

\begin{lemma}\label{loss1}
The weight loss at each patch of GPH is at most $w(C)/n\le w(C_0)/n$, where $w(C)$ and $w(C_0)$ are the total weights of $C$ and $C_0$, respectively.
\end{lemma}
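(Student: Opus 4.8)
I would split the statement into two independent claims: that the current cover $C$ satisfies $w(C)\le w(C_0)$, and that the loss GPH incurs at the current patching step is at most $w(C)/n$. The first claim I would dispatch immediately: every intermediate $C$ produced during the patching steps is again a cycle cover of $G[V]$ on the same vertex set $V$ (patching two vertex-disjoint cycles, each on one of its edges, yields a single cycle through exactly the same vertices), so, since $C_0$ is a \emph{maximum}-weight cycle cover, $w(C)\le w(C_0)$ for every such $C$. I would stress that this does \emph{not} come from monotonicity along the run of GPH — a single patch can have negative loss in a general metric — so the argument must go through the optimality of $C_0$, not through "each step only loses weight."

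For the second claim, the plan is to exhibit one admissible patch whose loss is at most $w(C)/n$; since GPH selects a patch of minimum loss over all pairs of edges from different cycles, this bounds GPH's loss. The engine is the elementary inequality
\[
Loss\big(\{a_1,b_1\},\{a_2,b_2\}\big)\ \le\ \min\big\{dist(a_1,b_1),\,dist(a_2,b_2)\big\},
\]
valid for any two vertex-disjoint edges in a metric space. I would prove it by comparing the removed pair with the \emph{average} of the two reconnection options $A=dist(a_1,b_2)+dist(a_2,b_1)$ and $B=dist(a_1,a_2)+dist(b_1,b_2)$: two applications of the triangle inequality, $dist(a_1,b_1)\le dist(a_1,b_2)+dist(b_2,b_1)$ and $dist(a_1,b_1)\le dist(a_1,a_2)+dist(a_2,b_1)$, give $2\,dist(a_1,b_1)\le A+B$, and the symmetric pair gives $2\,dist(a_2,b_2)\le A+B$; hence $\max\{A,B\}\ge(A+B)/2\ge\max\{dist(a_1,b_1),dist(a_2,b_2)\}$, and subtracting from $dist(a_1,b_1)+dist(a_2,b_2)$ yields the displayed bound.

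It then remains to choose the patch. Since $C$ is a cycle cover of the $n$-vertex graph $G[V]$, it has exactly $n$ edges, so its lightest edge $\{a_1,b_1\}$ satisfies $dist(a_1,b_1)\le w(C)/n$. As the current $C$ has at least two cycles, I pick any edge $\{a_2,b_2\}$ lying in a cycle different from the one containing $\{a_1,b_1\}$; patching these two cycles on $\{a_1,b_1\},\{a_2,b_2\}$ is admissible and, by the displayed inequality, has loss at most $dist(a_1,b_1)\le w(C)/n$. Combining this with the first claim gives the lemma.

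The only delicate point — and the closest thing to an obstacle — is the first claim: one must resist the tempting but false "weights never increase" argument and instead invoke the maximality of $C_0$ together with the observation that every intermediate cover is still a cycle cover. The rest reduces to the averaging inequality above and a one-line edge count.
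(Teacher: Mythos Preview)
Your proof is correct and follows essentially the same route as the paper: pick a lightest edge $\tau$ of $C$, observe $w(\tau)\le w(C)/n$ since $C$ has $n$ edges, and use the triangle inequality to show that patching on $\tau$ and any edge of another cycle has loss at most $w(\tau)$. Your averaging derivation of $Loss\le\min\{dist(a_1,b_1),dist(a_2,b_2)\}$ is just a more explicit version of what the paper summarises as ``the triangle inequality easily implies $Loss(u,\tau)\le w(\tau)$'', and your justification of $w(C)\le w(C_0)$ via the maximality of $C_0$ (rather than a monotonicity argument) is a detail the paper leaves implicit.
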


\begin{proof}
Let $\tau$ be a lightest edge in the current cycle cover $C$.
Then the weight $w(\tau)$ of this edge is at most $w(C)/n$. 
On the other hand, the triangle inequality easily implies that, for any edge $u$ in $C$, the value of $Loss(u,\tau)$ is at most $w(\tau)$.
So GPH can always choose a patch with weight loss at most $w(C)/n$.
The lemma is proved.
\hfill$\Box$
\end{proof}

\begin{corollary}\label{general}
In the general metric setting, the approximation ratio of GPH is at least $e^{-1/3}$.
\end{corollary}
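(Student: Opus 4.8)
The plan is to combine three ingredients: the optimality of the starting cycle cover, the multiplicative form of Lemma~\ref{loss1}, and an elementary bound on the number of patching steps. Since the optimal Hamiltonian cycle $H^*$ of $G[V]$ is itself a cycle cover of $G[V]$, the maximum-weight cycle cover $C_0$ found at the Initial Step satisfies $w(C_0)\ge w(H^*)$; hence it suffices to prove that the Hamiltonian cycle $H$ returned by GPH has weight at least $e^{-1/3}\,w(C_0)$.

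Next I would bound the number of patches. Every component of a cycle cover of the complete \emph{undirected} graph $G[V]$ is a simple cycle and so has at least three vertices; therefore $C_0$ consists of at most $n/3$ cycles, and GPH performs at most $n/3-1$ patches. Each cycle cover occurring during the run has exactly $n$ edges (a patch removes two edges and inserts two, keeping every vertex of degree $2$), so Lemma~\ref{loss1} applies at every step: passing from the current cover $C$ to the patched cover $C'$ we get $w(C')\ge w(C)-w(C)/n=(1-1/n)\,w(C)$. Iterating over the at most $n/3-1$ patches yields
\[
w(H)\ \ge\ (1-1/n)^{\,n/3-1}\,w(C_0).
\]

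It then remains to verify the elementary inequality $(1-1/n)^{\,n/3-1}\ge e^{-1/3}$ for all $n\ge 3$. Taking logarithms and rearranging, this is equivalent to $(n-3)\ln\frac{n}{n-1}\le 1$, which follows from $\ln\frac{n}{n-1}=\ln\!\left(1+\frac{1}{n-1}\right)\le\frac{1}{n-1}$, whence $(n-3)\ln\frac{n}{n-1}\le\frac{n-3}{n-1}<1$. Combining this with $w(C_0)\ge w(H^*)$ and the displayed bound gives $w(H)\ge e^{-1/3}\,w(H^*)$, which is the claimed approximation ratio.

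I do not expect a real obstacle here; the only substantive choice is to account for the weight loss multiplicatively rather than additively, since the naive additive estimate (total loss at most $(n/3-1)\,w(C_0)/n<w(C_0)/3$) only gives the weaker ratio $2/3$. A minor point is that when $n$ is not divisible by $3$ the number of patches is at most $\lfloor n/3\rfloor-1\le n/3-1$, so one uses monotonicity of $t\mapsto(1-1/n)^t$ to pass to the stated exponent.
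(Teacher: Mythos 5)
Your proof is correct and follows essentially the same route as the paper: both bound the number of patches by $n/3-1$ via the three-vertex-per-cycle observation, apply Lemma~\ref{loss1} multiplicatively to get the factor $(1-1/n)^{n/3-1}$, and compare with $e^{-1/3}$. You simply spell out the details (the comparison $w(C_0)\ge w(H^*)$ and the elementary verification of the exponential inequality) that the paper leaves implicit.
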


\begin{proof}
By Lemma~\ref{loss1}, the approximation ratio of GPH is at least
$(1-1/n)^{k-1}$, where $k$ is the number of cycles in $C_0$.
But, obviously, $k\le n/3$, so we obtain the estimate $(1-1/n)^{n/3-1}\ge e^{-1/3}$.
\hfill$\Box$
\end{proof}

\begin{lemma}\label{loss2}
For any edges $\{a,a'\}$, $\{b,b'\}$ from two vertex-disjoint cycles in $G[V]$, we have $Loss\big(\{a,a'\},\{b,b'\}\big)\le 2dist(a,b)$.
\end{lemma}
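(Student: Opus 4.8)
The plan is to bound the weight loss by exhibiting one concrete pair of replacement edges and using the triangle inequality, since $Loss$ is defined as the \emph{minimum} of the loss over the two possible patchings (equivalently, it uses the \emph{maximum}-total-weight replacement pair), so any single valid pair of replacement edges gives an upper bound. Concretely, consider the patching that replaces $\{a,a'\}$ and $\{b,b'\}$ by $\{a,b'\}$ and $\{a',b\}$. Then
\[
Loss\big(\{a,a'\},\{b,b'\}\big)\le dist(a,a')+dist(b,b')-dist(a,b')-dist(a',b).
\]

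First I would apply the triangle inequality to each of the two ``new'' nonadjacent-vertex distances by routing through $a$: we have $dist(a,b')\ge$ nothing useful directly, so instead I route the \emph{old} edges through $a$ and $b$. Precisely, $dist(a,a')\le dist(a,b)+dist(b,a')$ and $dist(b,b')\le dist(b,a)+dist(a,b')$. Wait — that over-introduces terms; the cleaner route is to bound $dist(a,a')\le dist(a,b)+dist(b,a')$ and $dist(b,b')\le dist(b,a)+dist(a,b')$, then substitute into the displayed inequality. This yields
\[
Loss\le \big(dist(a,b)+dist(a',b)\big)+\big(dist(a,b)+dist(a,b')\big)-dist(a,b')-dist(a',b)=2\,dist(a,b),
\]
which is exactly the claimed bound, with all the ``cross'' terms $dist(a',b)$ and $dist(a,b')$ cancelling.

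I would present this as a two-line computation: write down that $Loss$ is at most the loss of the specific patch into $\{a,b'\},\{a',b\}$, apply the two triangle inequalities above, and cancel. The main (and really only) subtlety is making sure the inequality direction is right in the very first step — that $Loss$, being $dist(a,a')+dist(b,b')$ minus the \emph{max} of the two replacement-pair weights, is $\le$ the quantity obtained from whichever replacement pair we pick — so that choosing the pair $\{a,b'\},\{a',b\}$ (rather than $\{a,b\},\{a',b'\}$) is legitimate. After that, the estimate is purely mechanical and no case analysis or doubling-dimension geometry is needed; this lemma is a pure metric-inequality statement.
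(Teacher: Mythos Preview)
Your proof is correct and follows essentially the same approach as the paper: bound $Loss$ above by choosing the replacement pair $\{a,b'\},\{a',b\}$, then apply the triangle inequalities $dist(a,a')\le dist(a,b)+dist(b,a')$ and $dist(b,b')\le dist(a,b)+dist(a,b')$ and cancel. The paper compresses this into a single displayed chain of inequalities, but the idea is identical; you should clean up the exploratory ``wait'' sentence before submitting.
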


\begin{proof}
Indeed, by the definition of weight loss and by the axioms of metric, we have
\begin{eqnarray*}
Loss\big(\{a,a'\},\{b,b'\}\big)\le dist(a,a')+dist(b,b')-dist(a,b')-dist(b,a')\le 2dist(a,b).
\end{eqnarray*}
The lemma is proved.
\hfill$\Box$
\end{proof}

\section{Justification of the Greedy Patching Heuristic}
\begin{theorem}\label{th1}
If the space $({\cal M},dist)$ is of doubling dimension at most $dim$, then the relative error of GPH is at most $(7/3+o(1))/n^{\frac{1}{2dim+1}}$ as $n\rightarrow\infty$.
\end{theorem}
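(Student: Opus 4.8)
The plan is to bound the total weight loss $L$ of GPH, i.e. the sum of the losses over all of its patching steps. Since each patch replaces two edges by two edges of maximal total weight, $w(H)=w(C_0)-L$, and since a Hamiltonian cycle is a cycle cover we have $w(C_0)\ge\mathrm{OPT}$; hence $w(H)\ge\mathrm{OPT}-L$ and the relative error of GPH is at most $L/\mathrm{OPT}$. So it suffices to show $L\le(7/3+o(1))\,\mathrm{OPT}/n^{\frac1{2\dim+1}}$.

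I would control the loss of a single GPH step by two complementary estimates. Lemma~\ref{loss1} gives the absolute bound $w(C_0)/n$, which is good once only few cycles remain. For the many-cycles regime I would use the doubling property: with $D$ the diameter of $V$ and $\rho>0$ a scale, Remark~\ref{rem1} lets one cover $V$ by at most $(2D/\rho)^{\dim}$ balls of radius $\rho$, and whenever the current cover has more cycles than that, two of them meet a common $\rho$-ball, so Lemma~\ref{loss2} produces a patch of loss at most $4\rho$; as GPH always takes a minimum-loss patch, its step then costs at most $4\rho$. Thus, at any fixed scale $\rho$, every GPH step performed while more than $(2D/\rho)^{\dim}$ cycles remain costs $\le 4\rho$, and afterwards at most $(2D/\rho)^{\dim}-1$ steps remain, each costing $\le w(C_0)/n$. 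Running this argument along a geometrically increasing sequence of scales $\rho_0<\rho_1<\cdots$ turns $L$ into a telescoping sum, the number of cycles dropping from $(2D/\rho_{j-1})^{\dim}$ to $(2D/\rho_j)^{\dim}$ at cost $\le 4\rho_j$ per step, plus a $w(C_0)/n$-per-step tail.

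The remaining ingredient is a lower bound on $\mathrm{OPT}$ reflecting the geometric spread of $V$, so that long many-cycles phases are paid for. Because inserting a vertex into a tour does not decrease its weight (triangle inequality), $\mathrm{OPT}\ge\mathrm{MaxTSP}(S)\ge|S|\,\rho$ for every $\rho$-separated subset $S\subseteq V$, so a fine net forces $\mathrm{OPT}$ to be large precisely when many cycles survive at a fine scale. Combining this with the identity $\sum_i\mathrm{diam}(c_i)\le w(C_0)/2$ for the cycles $c_i$ of $C_0$ (each cycle weighs at least twice its diameter) and with the maximum-weight property of $C_0$ — which, through local non-improvability of a maximum-weight cover, forces the low-weight cycles of $C_0$ to lie close to its heavy cycles and hence to be patched in cheaply — one chooses the crossover scale between the geometric and the absolute estimate so as to minimize the resulting bound. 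Optimizing this crossover as a function of $n$ and $\dim$ yields the exponent $\frac1{2\dim+1}$ and the constant $7/3$; the case $\dim=1$ would be handled separately, since there the scale-sums are logarithmic rather than geometric, and one must also track the $o(1)$ term arising from the discreteness of the net and from the passage $\mathrm{OPT}\le w(C_0)\le e^{1/3}\mathrm{OPT}$ supplied by Corollary~\ref{general}.

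I expect the balancing step to be the main obstacle. Used on its own the doubling estimate gives only a constant-factor bound, because if $C_0$ has $\Theta(n)$ cycles then the many-cycles phase is $\Theta(n)$ steps long; the crux is to certify, in exactly that situation, a proportionally large value of $\mathrm{OPT}$. Making this quantitative — relating the number of cycles of $C_0$, their diameters, the packing inequalities coming from bounded doubling dimension, and the net lower bound on $\mathrm{OPT}$, all with constants sharp enough — is where the real work lies, and it is what pins down both the exponent $\frac1{2\dim+1}$ and the factor $7/3+o(1)$.
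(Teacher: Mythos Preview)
Your proposal is not a proof but a plan with an acknowledged hole, and that hole is precisely the step the paper's argument is built around. You attempt a \emph{global} multiscale covering of $V$: at scale $\rho$ you cover $V$ by $(2D/\rho)^{\dim}$ balls and argue that while the current cover has more cycles than that, each GPH step costs $\le 4\rho$. The difficulty you yourself flag --- that when $C_0$ has $\Theta(n)$ cycles the cheap phase is $\Theta(n)$ steps long and you must then certify a proportionally large $\mathrm{OPT}$ --- is not something you actually resolve. Your suggested compensations (a net lower bound $\mathrm{OPT}\ge|S|\rho$, the inequality $\sum_i\mathrm{diam}(c_i)\le w(C_0)/2$, and a vague appeal to ``low-weight cycles of $C_0$ lie close to its heavy cycles'') do not combine into a quantitative bound: many cycles in $C_0$ does not by itself produce a large separated net, and nothing you have written gets rid of the diameter $D$ from the ball-count $(2D/\rho)^{\dim}$. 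The telescoping over scales and the proposed separate treatment of $\dim=1$ are symptoms of this: you are compensating for a parameter ($D$) that should never have entered.

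The paper avoids all of this by \emph{localizing} rather than covering globally. It fixes the endpoint $a_0$ of a lightest edge of $C_0$ and a single ball $B(a_0,R_0)$ with $R_0=w(C_0)/(n\rho)$, and then uses the maximality of $C_0$ in a concrete 2-swap argument (Lemma~\ref{R0}): any cycle of $C_0$ missing $B(a_0,R_0)$ has every edge of weight at least $2R_0-2t_0$, hence there can be at most $O(n\rho)$ such ``far'' cycles. Patches involving a far cycle are thus few and each costs $\le w(C_0)/n$ by Lemma~\ref{loss1}. All remaining cycles meet the fixed ball of radius $R_0$, so the doubling bound is applied only inside $B(a_0,R_0)$ --- no diameter $D$ appears --- and yields at most $(4/(\rho\delta))^{\dim}$ ``expensive'' near patches; the rest are cheap by construction. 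Two parameters $\rho,\delta$ suffice; optimizing them gives the exponent $1/(2\dim+1)$ and the constant $7/3$ directly, with no multiscale telescoping and no special case for $\dim=1$. The idea you are missing is precisely this exchange argument pinning the analysis to one ball of controlled radius.
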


\begin{proof}
Let $\{a_0, b_0\}$ be a shortest edge in $C_0$ and $t_0=dist(a_0,b_0)$.
Then, obviously, we have $\displaystyle t_0\le\frac{w(C_0)}{n}$.
Further, we will use a real-value parameter $\rho\in(0,1)$ to be specify later.
Define the value $\displaystyle R_0=\frac{w(C_0)}{n\rho}$ and consider the ball $B(a_0,R_0)$, which will play an important role in justifying GPH.



\begin{lemma}\label{R0}
Let $c$ be a cycle in $C_0$ none of whose vertices lie in the ball $B(a_0,R_0)$ and let $\{a,b\}$ be any edge in $c$.
Then $dist(a,b)\ge 2R_0-2t_0$.
\end{lemma}

\begin{proof}
Suppose that $dist(a,b)<2R_0-2t_0$.
Then, replacing the edges $\{a_0,b_0\}$, $\{a,b\}$ by $\{a_0,a\}$, $\{b_0,b\}$, we obtain a new cycle cover of $G[V]$ whose total weight is at least
\begin{eqnarray*}
w(C_0)+dist(a_0,a)+dist(b_0,b)-t_0-dist(a,b)\ge\\
w(C_0)+dist(a_0,a)+dist(a_0,b)-2t_0-dist(a,b)\ge\\
w(C_0)+2R_0-2t_0-dist(a,b)>w(C_0),
\end{eqnarray*}
which contradicts the choice of the cycle cover $C_0$.
The lemma is proved.
\hfill$\Box$
\end{proof}

\begin{definition*}
A cycle in $C$ which doesn't intersect the ball $B(a_0,R_0)$, i.e., none of whose vertices belongs to $B(a_0,R_0)$, will be referred to as \emph{far}.
A cycle which contains at least one vertex in $B(a_0,R_0)$ will be referred to as \emph{near}.
\end{definition*}

We will use the following denotation: we assume that, at the $i$th patching step, $i=1,2,\dots$, GPH replaces edges $e_{2i-1}$, $e_{2i}$ from cycles $c_{2i-1}$, $c_{2i}$ in $C$.

%

Let us divide all the patching steps in GPH into groups (types) I, II, and III:

The group I consists of the patches for which at least one of cycles $c_{2i-1}$, $c_{2i}$ is far.

The group II consists of the patches for which both cycles $c_{2i-1}$, $c_{2i}$ are near and $Loss(e_{2i-1},e_{2i})\le\displaystyle\frac{2\delta w(C_0)}{n}$, where $\delta\in(0,1)$ is another real-value parameter, additionally to $\rho$, which will be specified later.

The group III consists of all the other patches, i.e., those for which both cycles $c_{2i-1}$, $c_{2i}$ are near and $Loss(e_{2i-1},e_{2i})>\displaystyle\frac{2\delta w(C_0)}{n}$.

Now, let us estimate the total weight loss at each of these groups.
Denote by $K_{\rm I}$, $K_{\rm II}$, and $K_{\rm III}$ the numbers of patches in GPH of the types I, II, and III, respectively.

\begin{lemma}\label{KI}
$\displaystyle K_{\rm I}\le\frac{n\rho}{6(1-\rho)}$.
\end{lemma}

\begin{proof}
Obviously, each I-type patch reduces the number of far cycles in $C$ by $1$.
All the other patches do not increase this number.
So the the number of far cycles in the original cycle cover $C_0$ is at least $K_{\rm I}$.

By Lemma~\ref{R0}, every edge of a far cycle is of weight at least
$$2R_0-2t_0\ge\frac{2w(C_0)}{n}\Big(\frac{1}{\rho}-1\Big),$$
therefore, the total weight of all the far cycles in $C_0$ is at least $\displaystyle K_{\rm I}\frac{6w(C_0)}{n}\Big(\frac{1}{\rho}-1\Big)$.
But, on the other hand, this weight is at most $w(C_0)$, so we have $\displaystyle K_{\rm I}\le\frac{n\rho}{6(1-\rho)}$.
The lemma is proved.
\hfill$\Box$
\end{proof}

\begin{lemma}\label{KIII}
$\displaystyle K_{\rm III}\le\Big(\frac{4}{\rho\delta}\Big)^{dim}$.
\end{lemma}

\begin{proof}
Obviously, each III-type patch reduces the number of near cycles in $C$ by $1$.
All the other patches do not increase this number.
So, by the time of the first III-type patch, the current cycle cover $C$ contains at least $K_{\rm III}$ cycles which intersect the ball $B(a_0,R_0)$.
But $\displaystyle R_0=\frac{w(C_0)}{n\rho}$, so the ball $B(a_0,R_0)$ can be covered by $\displaystyle\Big(\frac{4}{\rho\delta}\Big)^{dim}$ balls of the radius $\displaystyle r=\frac{w(C_0)\delta}{2n}$ (see Remark~\ref{rem1}).
Thus, if $\displaystyle K_{\rm III}>\Big(\frac{4}{\rho\delta}\Big)^{dim}$, then there exist at least two vertices, say $a$ and $b$, from different cycles in $C$ which hit the same $r$-radius ball.

Denote by $a'$ and $b'$ vertices in $C$ adjacent to $a$ and $b$, respectively.
Then, by Lemma~\ref{loss2} and by the triangle inequality, we have
$$Loss\big(\{a,a'\},\{b,b'\}\big)\le 2dist(a,b)\le 4r=\frac{2w(C_0)\delta}{n}$$
It follows that, by the time of the first III-type patch, there exists a patch of the type II, which contradicts the rules of GPH.
The lemma is proved.
\hfill$\Box$
\end{proof}

For the value of $K_{\rm II}$, we will use the obvious estimate $K_{\rm II}\le n/3$.

By Lemmas \ref{KI} and~\ref{loss1}, the total weight loss at I-type patches is at most $\displaystyle\frac{w(C_0)\rho}{6(1-\rho)}$.
Next, since $K_{\rm II}\le n/3$, the total weight loss at II-type patches is at most $\displaystyle\frac{2\delta w(C_0)}{3}$.
Finally, by Lemmas \ref{KIII} and~\ref{loss1}, the total weight loss at III-type patches is at most $\displaystyle\Big(\frac{4}{\rho\delta}\Big)^{dim}\frac{w(C_0)}{n}$.
Thus, the relative error of GPH is
$$err\le\frac{\rho}{6(1-\rho)}+\frac{2\delta}{3}+\Big(\frac{4}{\rho\delta}\Big)^{dim}/n.$$

It remains to select good values of the parameters $\rho$ and $\delta$.
Let $\delta=\displaystyle\frac{1}{n^{1/(2dim+1)}}$ and $\rho=4\delta$.
If $n^{1/(2dim+1)}\ge 8$, then $\rho\le 1/2$ and $\displaystyle\frac{1}{1-\rho}=1+o(1)$ as $\rho\rightarrow 0$, so 
$$err\le\frac{4+o(1)}{6n^{1/(2dim+1)}}+\frac{2}{3n^{1/(2dim+1)}}+\frac{1}{n^{1/(2dim+1)}}=\frac{7/3+o(1)}{n^{1/(2dim+1)}}$$
as $n\rightarrow\infty$.
If $n^{1/(2dim+1)}<8$, then we use a simple estimate based on Corollary~\ref{general}: 
$$err\le 1-e^{-1/3}\approx 0.2835<7/24<\frac{7/3}{n^{1/(2dim+1)}}.$$
The theorem is proved.
\hfill$\Box$
\end{proof}

\section{Conclusion}
We prove that the simple greedy patching heuristic gives asymptotically optimal solutions of Max~TSP in doubling metrics.
An interesting direction for future work is comparing this heuristic with the asymptotically exact algorithm from~\cite{Shen2021} in practice, e.g., on random data.
Another possible subject of further investigation is improving the estimate for the relative error of GPH in the case of $d$-dimensional Euclidean space $(\mathbb R^d,\ell_2)$.
Our hypothesis is that this estimate is $O\big(1/n^{\frac{2}{d+1}}\big)$.

\vspace{2em}\noindent\textbf{Acknowledgments.}
The study was carried out within the framework of the state contract of the Sobolev Institute of Mathematics (project FWNF-2022-0019).


\end{document}